\newtheorem{Thm}{Theorem}
\newtheorem{Rem}{Remark}
\newtheorem{Prop}{Proposition}
\newtheorem{Lem}{Lemma}
\newenvironment{proof}{{\it Proof.}\ }{\hfill$\square$\par}
\newcommand{\n}{\nabla}
\newcommand{\CC}{\mathbb{C}}\newcommand{\RR}{\mathbb{R}}\newcommand{\ZZ}{\mathbb{Z}}
\newcommand{\Hol}{\operatorname{Hol}}
\newcommand{\Id}{\operatorname{Id}}
\newcommand{\Tr}{\operatorname{Tr}}
\newcommand{\Mat}{\operatorname{Mat}}
\newcommand{\IM}{\operatorname{Im}} \newcommand{\RE}{\operatorname{Re}}
\newcommand{\const}{\operatorname{const}}
\newcommand{\spec}{\operatorname{spec}}
\newcommand{\tilH}{\tilde{H}}
\newcommand{\hatH}{\hat{H}}
\newcommand{\calU}{\mathcal{U}}
\newcommand{\tilD}{\tilde{D}}
\newcommand{\hatD}{\hat{D}}
\newcommand{\detpm}{{\det}_\pm}
\newcommand{\tilB}{\tilde{B}}
\newcommand{\calD}{\mathcal{D}}
\newcommand{\tilT}{\tilde{T}}
\newcommand{\<}{\langle}\renewcommand{\>}{\rangle}
\begin{document}
\title{The Berry phase and the phase of the determinant}
\author{Maxim Braverman}
\thanks{Supported in part by the NSF grant DMS-1005888.}
\address{Department of Mathematics,
        Northeastern University,
        Boston, MA 02115,
        USA
         }

\begin{abstract}
We  show that under very general assumptions the adiabatic approximation of  the phase of the zeta-regularized determinant of the  imaginary-time Schr\"odinger operator with periodic Hamiltonian is equal to the Berry phase.  
\end{abstract}
\maketitle

\section{Introduction}\label{S:introduction}
In 1984 Michael Berry \cite{Berry84} discovered that an isolated eigenstate of an  adiabatically changing periodic Hamiltonian $H(t)$ acquires a phase, called the {\em Berry phase}. B.~Simon \cite{Simon83} gave an interpretation of this phase in terms of the holonomy of a certain Hermitian line bundle. We refer to \cite{GeomPhases_book89},\cite{GeomPhases_CLQ_book04} for further references and a detailed discussion of various aspects and applications of the Berry phase. 

It is known that in many interesting examples, \cite{Martinez89},\cite{AbanovAA02},\cite{BGSS07},\cite{BGSS09}, the Berry phase is related to the phase of the determinant of the corresponding imaginary-time Schr\"odinger operator $D_m= -i\frac{d}{dt}-imH(t)$ (here $m$ is a large constant). In this note we state and prove this relationship  under the most general assumptions about the Hamiltonian $H(t)$. 

Note that a regularization is needed to define the determinant of $D_m$ and the phase of the determinant depends of the choice of the regularization. To the best of our knowledge the study of this dependence in relation to the Berry phase was never conducted. In this note we consider the zeta-regularized determinant of $D_m$ and give a precise formulation and a rigorous proof of the relationship between the phase of this determinant and the Berry phase. In particular, we study the dependence of this relationship on the choice of the Agmon angle used in the definition of the zeta-function regularization, cf. Section~\ref{S:zeta}.

One of the difficulties in the computation of the phase of the determinant of $D_m$ is that the quantum adiabatic theorem, \cite{Kato50},\cite{Nenciu80}, does not hold for  the solutions of the imaginary time Schr\"odinger operator $D_m$ (cf. \cite{NenciuRasche92} for a discussion of the quantum adiabatic theorem for non self-adjoint operators). We explain this difficulty in more details in  Section~\ref{SS:plan}. 

The paper is organized as follows: In Section~\ref{S:Berry} we recall the definition of the Berry phase without assuming that the eigenvalues of $H(t)$ are isolated.  In Section~\ref{S:computationBerry} we collect some properties of the Berry phase. Most of these properties are well known to experts, but precise formulations and rigorous treatment of them, to the best of our knowledge, are missing in the literature.  In Section~\ref{S:zeta}  we recall the definition of the zeta-regularized determinant of elliptic operators. In Section~\ref{S:det=Berry} we formulate our main result -- Theorem~\ref{T:det=Berry}. In Section~\ref{S:prdet=Berry} we present  a proof of Theorem~\ref{T:det=Berry} based on the calculation of determinants of elliptic operators on a circle due to Burghelea, Friedlander and  Kappeler \cite{BFK91}.

\section{The Berry phase}\label{S:Berry}
In this section we fix the notation and recall the definition and the basic properties of the Berry phase. 

Let $H(t):\CC^N\to \CC^N$  be a family of self-adjoint Hamiltonians, which  depend smoothly on $t\in \RR$ and is $2\pi$-periodic $H(t)=H(t+2\pi)$. We view $H(t)$ as an operator-valued function on the circle $S^1=\{e^{it}:t\in [0,2\pi]\}$. Consider the Schr\"odinger equation 
\begin{equation}\label{E:Schrodinger}
	i\frac{d}{dt}\psi(t)\ = \ mH(t)\psi(t),
\end{equation}
where $m$ is a large real parameter.

\subsection{The case of an isolated eigenvalue}\label{SS:isolated}
Assume first that for each $t\in [0,2\pi]$ there exists an isolated non-degenerate eigenvalue $E(t)$ of $H(t)$ which depends continuously on $t$. The quantum adiabatic theorem \cite{Kato50},\cite{Nenciu80} claims that the solution $\psi_m(t)$ of the time-dependent Shr\"odinger equation \eqref{E:Schrodinger}  with initial value $\psi_m(0)=\phi_0$, where $H(0)\phi_0= E(0)\phi_0$, has the property that as $m\to\infty$, $\psi_m(t)$ approaches an eigenvector $\phi_t$ with $H(t)\phi_t=E(t)\phi_t$. More precisely, suppose that  $\phi_t\in \CC^N$ is a continuous  family of eigenvectors of $H(t)$ with eigenvalue $E(t)$,
\[
	H(t)\,\phi_t\ = \ E(t)\phi_t.
\]
Then as $m\to \infty$, 
\[
	\psi_m(t)\ = \ e^{i\alpha_m(t)}\phi_t\ +\ o(1),
\]
where $\alpha_m(t)\in \RR$.  In particular, it follows that 
\[
	\psi(2\pi)\ = \ e^{i\alpha_m(2\pi)}\phi(0) \ +\ o(1).
\]
Michael Berry \cite{Berry84} discovered that 
\begin{equation}\label{E:Berry isolated}
	\alpha_m(2\pi)\ = \ -m\int_0^{2\pi}E(t)\,dt \ + \ \gamma_E, 
\end{equation}
where $\gamma_E$ is independent of $m$. The number $\gamma_E$ is called the {\em Berry phase} corresponding to the energy level $E(t)$. 

\subsection{B.~Simon's description of the Berry phase}\label{SS:Simon construction}
Barry Simon \cite{Simon83} gave a geometric description of the Berry phase $\gamma_E$. Let
\[
	L^E\ :=\ \big\{(t,\psi):\,H(t)\psi =E(t)\psi\big\}
\]
be the complex line bundle over $S^1$ whose fiber $L^E_t$ over $e^{it}\in S^1$ is given by the eigenspace of $H(t)$ with eigenvalue $E(t)$. It has a natural unitary connection 
\begin{equation}\label{E:connectionline}
	\n \phi(t)\ = \ P_t\,\frac{d}{dt}\phi(t).
\end{equation}
Here $\phi(t)\in L^E_t\subset \CC^N$ is a section of $L$ and $P_t:\CC^N\to L^E_t$ is the orthogonal projection. Let $\Hol_\n:L^E_0\to L^E_0$ denote the holonomy of $\n$ along the circle $S^1$. Then $\Hol^E_\n$ is a multiplication by a complex number of absolute value one, which can be written as $e^{i\gamma_E}$. The number $\gamma_E\in \RR/2\pi\ZZ$ is exactly  the Berry phase corresponding to the energy level $E=E(t)$.

\subsection{The general case}\label{SS:nonisolated}

More generally, \cite{WilczekZee84}, suppose that  $\lambda(t)\in \RR$ is a continuous real valued function such that $\lambda(t)$ is not in the  spectrum of $H(t)$ for all $e^{it}\in S^1$.  Denote by $F_t^+\subset \CC^N$ (respectively $F_t^-$) the span of the eigenvectors of $H(t)$ corresponding to the eigenvalues which are bigger than $\lambda(t)$ (respectfively smaller than $\lambda$) and consider the vector bundles
\[
	F^\pm\ :=\ \big\{(t,\psi):\,\psi\in F_t^\pm\big\}
\]
over $S^1$.  Let $k= \dim{}F^-$ and consider the $k$-particle fermionic Fock space $\Lambda^k\CC^N$  (here $\Lambda^kV$ denotes the $k$-th exterior power of the vector space $V$). The  Hamiltonian $H(t)$ induces an operator 
\[
	H_k(t):\,\Lambda^k\CC^N \ \longrightarrow \ \Lambda^k\CC^N,
\]
whose smallest eigenvalue $E(t)$ is isolated and satisfies $E(t)= E_1(t)+\cdots+E_k(t)$, where $E_1(t),\dots,E_k(t)$ are all  the eigenvalues of $H(t)$ which are less than $\lambda(t)$, counted with their multiplicities.  The Berry phase $\gamma_{(-\infty,\lambda)}$ is defined to be the Berry phase corresponding to the energy level $E(t)$ of  the operator $H_k(t)$. 

If  for all $t\in S^1$ the spectrum of the restriction of $H(t)$ to $F(t)$ consists of simple eigenvalues $E_1(t),\ldots E_k(t)$ then 
\[
	\gamma_{(-\infty,\lambda)}\ =\ \gamma_{E_1}\ +\cdots +\ \gamma_{E_k}.
\] 
Notice, however, that $\gamma_{(-\infty,\lambda)}$ is defined  even when the eigenvalues $E_1(t),\ldots E_k(t)$ are not isolated and the individual Berry phases $\gamma_{E_i}$ are not well defined.  

The phase $\gamma_{(-\infty,\lambda)}$ can be interpreted as a holonomy of a connection in a way similar to the one presented in Section~\ref{SS:Simon construction}. As in \eqref{E:connectionline} we define a unitary connection on $F^-$ by 
\begin{equation}\label{E:connectionvector}
	\n \phi(t)\ = \ P_t\,\frac{d}{dt}\phi(t).
\end{equation}
where $P_t:\CC^N\to F_t^-$ denotes the unitary projection. Then
\begin{equation}\label{E:Berrylambda}
	e^{i\gamma_{(-\infty,\lambda)}} \ = \ \det \Hol_\n.
\end{equation}

\section{Computation of the Berry phase}\label{S:computationBerry}

In this section we present some basic fact about the Berry phase and give two explicit formulae for its computations. The results of this section are known to experts but the precise formulations and rigorous treatment of these results can not be easily found in the literature.  

Recall that $P_t:\CC^N\to F^-_t$ denotes the orthogonal projection on the space $F^-_t$ spanned by the eigenfunctions of $H(t)$ which have eigenvalues less than $\lambda(t)$. Our first aim is to construct a family of unitary matrices $U(t)$ such that with respect to the decomposition $\CC^N= F^+_0\oplus F^-_0$ the operator $U(t)^{-1}H(t)U(t)$ has  a block-diagonal form 
\[
	\begin{pmatrix}\tilH^+(t)&0\\0&\tilH^-(t)\end{pmatrix},
\]
where $\tilH^+(t)>\lambda(t)$ and $\tilH^-(t)< \lambda(t)$. Equivalently, $U(t)$ should satisfy $U(t)^{-1}P_tU(t)\ = P_0$.

\begin{Prop}\label{P:H+H-}
Let $U(t)\in \Mat_{N\times N}(\CC)$ denote the solution of  the initial value problem\footnote{The family $U(t)$ is sometimes referred to as {\em Kato's evolution}.}
\begin{equation}\label{E:dotPP}
 \begin{aligned}
  \frac{d}{dt}\,U(t) \ &= \ [\dot{P}_t,P_t]\,U(t)\\
  U(0)\ &= \ \Id.
 \end{aligned}
\end{equation}
Then for all $t\in [0,2\pi]$ we have
\begin{equation}\label{E:U-1PU}
	U(t)^{-1}P_t\,U(t)\ = \ P_0,
\end{equation}
and the Berry phase $\gamma_{(-\infty,\lambda)}$ is given by 
\begin{equation}\label{E:Hol=U}
	e^{\gamma_{(-\infty,\lambda)}}\ = \ \det\Hol_\n\ = \ \det\Big(P_0\circ U(2\pi)\circ P_0:\, F^-_0\ \to \ F^-_0\Big).
\end{equation}
\end{Prop}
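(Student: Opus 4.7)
The plan is to prove \eqref{E:U-1PU} by direct differentiation and then to exhibit the horizontal lift of $\n$ explicitly in terms of $U$, which reduces the holonomy to a restriction of $U(2\pi)$.

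First, set $Q(t) := U(t)^{-1} P_t U(t)$. Differentiating $Q$ and using $\dot U = [\dot P_t, P_t]\,U$ together with the consequent identity $\tfrac{d}{dt}(U^{-1}) = -U^{-1}[\dot P_t, P_t]$, one finds
\[
\dot Q(t) \;=\; U(t)^{-1}\bigl(\dot P_t \;-\; [\dot P_t, P_t]\,P_t \;+\; P_t\,[\dot P_t, P_t]\bigr)\,U(t).
\]
Differentiating $P_t^2 = P_t$ yields the identity $\dot P_t P_t + P_t \dot P_t = \dot P_t$, and multiplying this by $P_t$ on either side gives the sharper identity $P_t \dot P_t P_t = 0$. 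Expanding the two commutators inside the parentheses using these two relations causes every term to cancel, so $\dot Q \equiv 0$ and $Q(t) = Q(0) = P_0$, which is \eqref{E:U-1PU}.

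Next, I would exhibit the parallel transport. Rewrite \eqref{E:U-1PU} as $P_t U(t) = U(t) P_0$. For $\phi_0 \in F_0^-$, set $\phi(t) := U(t)\phi_0$; then $P_t\phi(t) = U(t) P_0 \phi_0 = \phi(t)$, so $\phi$ is genuinely a section of $F^-$. Moreover $\dot \phi = [\dot P_t, P_t]\phi$, whence
\[
\n \phi \;=\; P_t \dot \phi \;=\; \bigl(P_t \dot P_t P_t \;-\; P_t^2 \dot P_t\bigr)\phi \;=\; -P_t \dot P_t \phi \;=\; -\bigl(\dot P_t - \dot P_t P_t\bigr)\phi \;=\; 0,
\]
using $P_t \phi = \phi$ in the last step. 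Thus $\phi$ is a parallel section of $F^-$ realizing the parallel transport of $\phi_0$ along $S^1$.

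Finally, by $2\pi$-periodicity of $H$ we have $P_{2\pi} = P_0$, and \eqref{E:U-1PU} at $t=2\pi$ therefore reads $P_0 U(2\pi) = U(2\pi) P_0$; in particular $U(2\pi)$ preserves $F_0^-$. The unitarity of $U$, which is needed so that this restriction can be identified with the holonomy of the unitary connection $\n$, follows from the skew-adjointness of the generator $[\dot P_t, P_t]$, itself a consequence of the self-adjointness of $P_t$ and $\dot P_t$. The parallel transport of $\phi_0 \in F_0^-$ around the loop is then $U(2\pi)\phi_0 = P_0 U(2\pi) P_0\,\phi_0$, so $\Hol_\n = P_0 \circ U(2\pi) \circ P_0$ on $F_0^-$, and \eqref{E:Hol=U} follows from \eqref{E:Berrylambda}. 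The main point that requires care is the algebraic cancellation in the computation of $\dot Q$; everything else is a direct consequence once \eqref{E:U-1PU} is in hand.
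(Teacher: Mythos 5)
Your proof is correct, and the second half is genuinely more direct than the paper's. The first half (differentiating $Q(t)=U(t)^{-1}P_tU(t)$ and using $\dot P_t=\dot P_tP_t+P_t\dot P_t$ together with $P_t\dot P_tP_t=0$) is identical to the paper's computation.

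For the holonomy, the paper instead introduces an auxiliary matrix $\Psi(t)$ solving $\dot\Psi=-P_0U^{-1}\dot UP_0\,\Psi$, $\Psi(0)=P_0$, proves that $\Phi(t):=U(t)\Psi(t)$ is $\nabla$-parallel, and then separately verifies $\det\Psi(2\pi)=1$ by showing $\Tr\big(P_0U^{-1}\dot UP_0\big)=0$. Your observation that $\phi(t):=U(t)\phi_0$ with $\phi_0\in F_0^-$ is \emph{already} a parallel section of $F^-$ short-circuits this: from \eqref{E:U-1PU} one in fact has
\[
P_0\,U^{-1}\dot U\,P_0 \;=\; U^{-1}\,P_t\,[\dot P_t,P_t]\,P_t\,U \;=\; 0,
\]
so the paper's $\Psi$ is identically $P_0$ and the correction it introduces is vacuous. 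The paper only establishes the weaker fact $\Tr(\cdot)=0$, which is enough for the determinant but obscures that $U$ itself implements parallel transport. Your route makes this transparent, and the conclusion $\Hol_\nabla=P_0\circ U(2\pi)\circ P_0$ on $F_0^-$ follows immediately. One small remark: unitarity of $U$, while true and worth noting, is not strictly required to identify the restriction of $U(2\pi)$ with $\Hol_\nabla$ — parallel transport and its determinant are defined independently of any metric; unitarity is what guarantees $|\det\Hol_\nabla|=1$, i.e.\ that the holonomy is a pure phase.
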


\begin{proof}
Note that 
\begin{equation}\label{E:dotP}\notag
  \begin{aligned}
	\dot{P}_t\ = \ &\frac{d}{dt}\,P_t^2\ = \ \dot{P}_t\,P_t\ + \ P_t\,\dot{P}_t,\\
	P_t\,\dot{P}_t\,P_t\ &=\ P_t\,\big(\dot{P}_t-P_t\dot{P}_t\big)\ = \ 0.
  \end{aligned}
 \end{equation}
Using these equalities and \eqref{E:dotPP} we obtain (cf. for example, \cite{Nenciu80})
\begin{multline}\notag
	\frac{d}{dt}\,U(t)^{-1}P_t\,U(t)\ = \ U(t)^{-1}\,\Big(\, -[\dot{P}_t,P_t]\,P_t\ +\ \dot{P}_t\ +\ P_t\,[\dot{P_t},P_t]\,\Big)\,U(t) 
	\\ = \ 	
	U(t)^{-1}\,\Big(\, -\,\dot{P}_t\,P_t\ +\ P_t\,\dot{P}_t\,P_t\ +\ \dot{P}_t\ +\ P_t\,\dot{P}_t\,P_t\ - \ \dot{P}_t\,P_t\,\Big)\,U(t) \ = \ 0.
\end{multline}
Hence, $U(t)^{-1}P_tU(t)=\const=P_0.$ The equaltiy \eqref{E:U-1PU} is proven.

To prove \eqref{E:Hol=U} consider the solution $\Psi(t)$ of the initial value problem 
\begin{equation}\label{E:PUdotUP}
 \begin{aligned}
  \frac{d}{dt}\,\Psi(t) \ &= \ -\,P_0\,U(t)^{-1}\dot{U}(t)\,P_0\,\Psi(t)\\
  \Psi(0)\ &= \ P_0.
 \end{aligned}
\end{equation}
Note that  $\Psi(t)$ commutes with $P_0$ and its image lies in $F_0^-$.  In particular,
\begin{equation}\label{E:Psi=PPsi}
	\Psi(t)\ =\ P_0\Psi(t).
\end{equation}
Set 
\[
	\Phi(t)\ := \ U(t)\, \Psi(t).
\]
Then $\Phi(0)=P_0$ and 
\[
	\n \Phi(t)\ = \ P_t\,\frac{d}{dt}\,\big(\,U(t)\Psi(t)\,\big) \ = \ P_t\,\big(\, \dot{U}(t)\Psi(t)+U(t)\dot\Psi(t)\,\big).
\]
Using \eqref{E:PUdotUP} and \eqref{E:Psi=PPsi} we now obtain
\begin{equation}\notag
	\n \Phi(t)\ = \ UP_0U^{-1}\dot{U}\Psi+UP_0\dot\Psi \ = \ U\,P_0\,\Big(\, P_0\,U^{-1}\dot{U}P_0\Psi+\dot\Psi\,\Big) \ = \ 0.
\end{equation}
Hence, 
\[
	\Hol_\n\ =\  \Phi(2\pi).
\]
Since, $P_{2\pi}=P_0$ we obtain from \eqref{E:U-1PU} that $U(2\pi)P_0= P_0U(2\pi)P_0$ and
\[
	\Hol_\n\ =\  \Phi(2\pi)\ = \ U(2\pi)\Psi(2\pi)\ = \ P_0\,U(2\pi)P_0\Psi(2\pi).
\]

To finish the proof of \eqref{E:Hol=U}  it remains to show that $\det\Psi(2\pi)=1$. This follows from the following computation
\begin{multline}\label{E:detPsi=const}
	\frac{d}{dt}\,\log\det\Psi(t)\ = \ \Tr\,\dot\Psi\Psi^{-1} \ = \ -\Tr P_0U^{-1}\dot{U}P_0\\ = \ 
	-\Tr U^{-1}P_t\dot{U}U^{-1}P_tU\ = \ -\Tr P_t\dot{U}U^{-1}P_t\ = \ -\Tr P_t[\dot{P}_t,P_t]P_t\ = \ 0.
\end{multline}
\end{proof}

We now give a second formula for the Berry phase. Let $\calU(t):\CC^N\to \CC^N$ be a smooth family of unitary maps such that 
\footnote{
Such a family can be constructed, for example, as follows. By \eqref{E:U-1PU} the operator $U(2\pi)$ commutes with $P_0= P_{2\pi}$ and with $\Id-P_0$.  Hence, with respect to the decomposition $\CC^N=F_0^+\oplus{}F_0^-$ it has a block-diagonal form
\(
	U(2\pi)\ = \ \begin{pmatrix} U^+&0\\0&U^- \end{pmatrix}.
\)
Let $a^\pm$ be self-adjoint matrices such that $e^{2\pi ia^\pm}=U^\pm$. Then we can set 
\(
	\calU(t)\ := \  \begin{pmatrix} e^{-ita^+}&0\\0&e^{-ita^-} \end{pmatrix}\cdot U(t).
\)}

\begin{equation}\label{E:calU(2pi)}
	\calU(0)\ = \ \calU(2\pi)\ = \ \Id,
\end{equation}
and
\begin{equation}\label{E:calU-1PU}
	\calU(t)^{-1}P_t\,\calU(t)\ = \ P_0.
\end{equation}

Set $A(t):=\calU(t)^{-1}U(t)$. Then $A(2\pi)= U(2\pi)$ and  with respect to the decomposition $\CC^N=F_0^+\oplus{}F_0^-$ we have 
\begin{equation}\label{E:A=A+A-}
	A(t)\ = \ \begin{pmatrix} A^+(t)&0\\0&A^-(t) \end{pmatrix}.
\end{equation}
In particular, it follows from \eqref{E:Hol=U} that 
\begin{equation}\label{E:Hol=A}
	\Hol_\n\ =\ A^-(2\pi).
\end{equation}

\begin{Prop}\label{P:PcalUP}
The Berry phase $\gamma_{(-\infty,\lambda)}$ is given by 
\begin{equation}\label{E:Hol=calU}
	\gamma_{(-\infty,\lambda)}\ = \ i\int_0^{2\pi}\Tr\Big(P_0\,\calU(t)^{-1}\dot\calU(t)P_0\Big)\,dt.
\end{equation}
\end{Prop}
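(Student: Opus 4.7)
The plan is to reduce (\ref{E:Hol=calU}) to a direct computation modeled on (\ref{E:detPsi=const}), using the block-diagonal identity $\Hol_\n=A^-(2\pi)$ established in (\ref{E:Hol=A}). Since $e^{i\gamma_{(-\infty,\lambda)}}=\det\Hol_\n$, it is enough to prove
\[
    \log\det A^-(2\pi)\ =\ -\int_0^{2\pi}\Tr\Big(P_0\,\calU(t)^{-1}\dot\calU(t)\,P_0\Big)\,dt,
\]
where the left-hand side is taken along the continuous branch of $\log\det A^-(t)$ with value $0$ at $t=0$; dividing by $i$ then yields the proposition.

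I would start by writing down the ODE satisfied by $A(t)=\calU(t)^{-1}U(t)$. A direct differentiation using (\ref{E:dotPP}) gives
\[
    \dot A(t)\ =\ \Big(\,\calU(t)^{-1}[\dot P_t,P_t]\calU(t)\ -\ \calU(t)^{-1}\dot\calU(t)\,\Big)\,A(t),\qquad A(0)=\Id.
\]
By (\ref{E:A=A+A-}) the matrix $A(t)$ commutes with $P_0$, so the block $A^-(t)=P_0A(t)P_0|_{F_0^-}$ is invertible with $A^-(t)^{-1}\dot A^-(t)=P_0A(t)^{-1}\dot A(t)P_0$. Applying Jacobi's formula, cyclicity of the trace, and the identity $A(t)P_0A(t)^{-1}=P_0$, I obtain
\[
    \frac{d}{dt}\log\det A^-(t)\ =\ \Tr\Big(P_0\,\calU^{-1}[\dot P_t,P_t]\calU\,P_0\Big)\ -\ \Tr\Big(P_0\,\calU^{-1}\dot\calU\,P_0\Big).
\]

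The crux of the argument is that the first trace vanishes. Differentiating (\ref{E:calU-1PU}) gives $\calU^{-1}[\dot P_t,P_t]\calU=[\calU^{-1}\dot P_t\calU,\,P_0]$, and for any operator $X$ one has the sandwich identity $P_0[X,P_0]P_0=P_0XP_0-P_0XP_0=0$. This is the very same mechanism that makes the analogous term disappear in the last line of (\ref{E:detPsi=const}). Integrating the surviving term from $0$ to $2\pi$ and using $A^-(0)=\Id_{F_0^-}$ produces the displayed equality, hence (\ref{E:Hol=calU}).

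The only real obstacle is bookkeeping: keeping straight the passage between $\Tr$ on $\CC^N$ and $\Tr$ on $F_0^-$, and verifying that each piece of $A^{-1}\dot A$ coming from differentiating $\calU^{-1}$, differentiating $U$, and from the cross term lands correctly under the $P_0$-sandwich. Once the ODE for $A(t)$ is written down, the remainder of the argument is essentially the same algebraic manipulation as in the proof of Proposition~\ref{P:H+H-}.
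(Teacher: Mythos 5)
Your proposal is correct and follows essentially the same route as the paper: both differentiate $A=\calU^{-1}U$, apply Jacobi's formula to the block $A^-$, and observe that the curvature-type term is killed by the $P_0$-sandwich, so that $\frac{d}{dt}\log\det A^-=-\Tr\big(P_0\,\calU^{-1}\dot\calU\,P_0\big)$, after which \eqref{E:Hol=A} finishes the argument. The only cosmetic difference is that you dispose of the curvature term by the pointwise operator identity $P_0[X,P_0]P_0=0$, whereas the paper recycles the trace identity $\Tr\big(P_0U^{-1}\dot U P_0\big)=0$ already established in \eqref{E:detPsi=const}; these are the same vanishing in slightly different clothing.
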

(Note that the operator $P_0\,\calU(t)^{-1}\dot\calU(t)P_0$ is skew-adjoint and, hence, the right hand side of \eqref{E:Hol=calU} is real.)

\noindent
\begin{proof}
By \eqref{E:A=A+A-}, the matrix $A(t)$ commutes with $P_0$.  Hence,
\begin{equation}\notag
	 P_0\,\calU(t)^{-1}\dot\calU(t)P_0 \ = \ P_0\,AU^{-1}\dot{U}A^{-1}P_0 \ - \ P_0\,\dot{A}A^{-1}P_0 \ = \ 
	A^-P_0U^{-1}\dot{U}P_0(A^-)^{-1}-\dot{A}^-(A^-)^{-1}.
\end{equation}
Recall from \eqref{E:detPsi=const} that $\Tr\big(P_0U^{-1}\dot{U}P_0\big)=0$. Hence, 
\[
	\Tr\Big(P_0\,\calU(t)^{-1}\dot\calU(t)P_0\Big) \ = \ -\Tr\dot{A}^-(A^-)^{-1} \ = \ -\frac{d}{dt}\log\det A^-.
\]
The proposition follows now from Proposition~\ref{P:H+H-} and \eqref{E:Hol=A}.
\end{proof}

\section{The zeta-regularized determinant}\label{S:zeta} 
Consider the operator $D_m=-i\frac{d}{dt}-imH(t)$ where $m$ is a large real parameter. We now recall the definition of the  zeta-regularized determinant of such an operator introduced by Ray and Singer \cite{RaySinger71}. For $\theta\in \RR$ denote $R_\theta:=\{\rho e^{i\theta}:\,\rho\ge0\}$. An angle $\theta\not\in \pi\ZZ$ is called an {\em Agmon angle} for $D_m$ if $R_\theta$ does not intersect the spectrum of $D_m$.  Any  $\lambda\in \CC\backslash{R_\theta}$ has a unique representation in the form 
$\lambda= |\lambda|\cdot e^{i\alpha} $ where $\theta<\alpha<\theta+2\pi$.  For $s\in \CC$ we set $\lambda_\theta^s:=  |\lambda|^s\cdot e^{is\alpha}$ and $\log_\theta(\lambda):= \log|\lambda|+i\alpha$.

Let $\lambda_1,\lambda_2,\ldots$ be the set of the eigenvalues of $D_m$ (each eigenvalue appear in this list the number of times equal to its algebraic multiplicity). The zeta-function of $D_m$ is defined by the formula
\[
	\zeta_{\theta,D_m}(s)\ := \ \sum_{j=1}^\infty (\lambda_j)_\theta^{-s}.
\]
The sum above is absolutely convergent for $\mathrm{Re}(s)>1$. Seeley  \cite{Seeley66} showed that it defines a holomorphic function on $\mathrm{Re}(s)>1$ which has a meromorphic extension to the whole complex plane which is regular at $0$. Notice that 
\[
	\zeta_{\theta,D_m}'(s)\ := \ -\sum_{j=1}^\infty \log_\theta\lambda_j\cdot(\lambda_j)_\theta^{-s}.
\]
Thus formally 
\[
	\zeta_{\theta,D_m}'(0)\ := \ -\sum_{j=1}^\infty \log\lambda_j
	\ = \ -\log_\theta\left(\lambda_1\cdot\lambda_2\cdots\right).
\]
Of course, the infinite sum and the infinite product in the equation above are divergent. However, this formal equality justifies the definition
\begin{equation}\label{E:zetadet}
	{\det}_\theta D_m \ := \ \exp\left(-\zeta_{\theta,D_m}'(0)\right).
\end{equation} 
The determinant $\det_\theta{}D_m$ depends on the choice of the Agmon angle $\theta$. However, if there are only finitely many eigenvalues of $D_m$ in the solid angle $\{\rho\cdot{}e^{i\alpha}:\theta_1\le \alpha\le\theta_2\}$ then $\det_{\theta_1}D_m= \det_{\theta_2}D_m$, cf. for example \cite[\S2.4]{BrAbanov}. Since the leading symbol of $D_m$ is self-adjoint, if  $0<\theta_1<\theta_2<\pi$ then there are only finitely many eigenvalues of $D_m$ in the solid angle  $\{\rho\cdot{}e^{i\alpha}:\theta_1\le \alpha\le\theta_2\}$, cf. \cite[\S10.1]{ShubinPDObook}. It follows that the determinant is the same for all Agmon angles  $\theta\in(0,\pi)$. We denote $\det_+D_m:= \det_\theta{}D_m$ for any $\theta\in (0,\pi)$. Similarly, the determinant does not depend on $\theta\in(-\pi,0)$ and we denote this determinant by $\det_-D_m$. 

The equation \eqref{E:zetadet} defines a particular choice of the logarithm of the determinant 
\[
	\log{\det}_\theta{}D_m\ =\  -\zeta_{\theta,D_m}'(0). 
\]
Remark, however, that $\zeta_{\theta,D_m}'(0)$ does depend on the angle $\theta\in (0,\pi)$. Thus $\log\det_\pm{}D_m$ is only defined modulo $2\pi i\ZZ$.

Let $D_m^*$ denote the adjoint of $D_m$. Because the zeta-regularized determinant depends on the choice of the Agmon angle $\theta$,  the complex conjugate of ${\det}_\theta{}D_m$ is equal not  to  ${\det}_\theta{}D_m^*$  but to  $\log{\det}_{-\theta}{}D_{m}^*$. Hence, 
\begin{equation}\label{E:complexconjugate}
	\overline{\log\detpm D_{m}}\ = \ \log{\det}_\mp{}D_{m}^*.
\end{equation}

\section{The main result}\label{S:det=Berry}
We are now ready to formulate our main result. 
\begin{Thm}\label{T:det=Berry}
Let $H(t):\CC^N\to \CC^N$ be a $2\pi$-periodic family of self-adjoint  Hamiltonians depending smoothly on $t$. Assume that 0 is not in the spectrum of $H(t)$ for all $t\in S^1$ and let $F^\pm_t\subset \CC^N$  denote the subspaces spanned by the eigenvectors of $H(t)$ with negative and positive eigenvalues respectively. Set $N^\pm=\dim{}F^\pm_t$. Then modulo $2\pi{}i\ZZ$ we have
\begin{align}
   \mathrm{Im} \log {\det}_+ D_m\ &= \ N^-\pi\ + \ \gamma_{(-\infty,0)}\ + \ o(1),\\
     \label{E:det=Berry-}
   \mathrm{Im} \log {\det}_- D_m\ &= \  N^+\pi\ + \  \gamma_{(-\infty,0)}\ + \ o(1),
\end{align}
where $\gamma_{(-\infty,0)}$ is the Berry phase defined in \eqref{E:Berrylambda}, and $o(1)\to 0$ as $m\to \infty$.
\end{Thm}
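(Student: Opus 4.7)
I would reduce the zeta-regularized determinant of $D_m$ to a finite-dimensional monodromy determinant via the Burghelea--Friedlander--Kappeler formula \cite{BFK91}, and then extract the Berry phase from the large-$m$ asymptotics of that monodromy.

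The first step is a gauge reduction. Using the family $\calU(t)$ from Proposition~\ref{P:PcalUP}, which satisfies $\calU(0)=\calU(2\pi)=\Id$, the conjugation
\[
	\tilD_m\ :=\ \calU^{-1}D_m\calU\ =\ -i\tfrac{d}{dt}\ -\ iB(t)\ -\ imG(t),\quad B(t):=\calU^{-1}\dot\calU,\quad G(t):=\calU^{-1}H\calU,
\]
is well defined on $S^1$ and preserves $\detpm$. By \eqref{E:calU-1PU} the self-adjoint $G(t)$ is block-diagonal along the fixed splitting $\CC^N=F_0^+\oplus F_0^-$ with strictly positive block $G^+$ and strictly negative block $G^-$. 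The BFK formula, applied to the first-order elliptic operator $\tilD_m$, expresses $\detpm\tilD_m$ in terms of $\det(\Id-S_m(2\pi))$, where $S_m(t)$ is the monodromy solving $\dot S_m=-(B+mG)S_m$, $S_m(0)=\Id$, together with a phase correction determined by the Agmon angle and the leading symbol of $\tilD_m$. Tracking this correction for the two Agmon angle ranges $(0,\pi)$ and $(-\pi,0)$ is precisely what produces the $N^-\pi$ and $N^+\pi$ summands in the two formulas of the theorem.

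The second step is to analyze $\det(\Id-S_m(2\pi))$ as $m\to\infty$. Since $G^+>0$ and $G^-<0$ uniformly, the block $S_m^{++}(2\pi)$ is exponentially small in $m$ and $S_m^{--}(2\pi)^{-1}$ is exponentially small, while the off-diagonal blocks are only polynomially bounded. A Schur-complement calculation then gives
\[
	\det\bigl(\Id-S_m(2\pi)\bigr)\ =\ (-1)^{N^-}\det S_m^{--}(2\pi)\,(1+o(1)).
\]
A formal block-triangularization of the ODE defining $S_m$ in powers of $1/m$ decouples the $\pm$ sectors up to exponentially small errors and reduces $S_m^{--}$ to the solution of a decoupled ODE on $F_0^-$ with generator $-(P_0BP_0+mG^-)+O(1/m)$. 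Liouville's formula for this decoupled ODE yields
\[
	\det S_m^{--}(2\pi)\ =\ \exp\!\Bigl(-\!\int_0^{2\pi}\!\bigl[\Tr(P_0B(t)P_0)+m\Tr G^-(t)\bigr]\,dt\Bigr)(1+o(1)).
\]
By Proposition~\ref{P:PcalUP} the first integral equals $-i\gamma_{(-\infty,0)}$, while the second is manifestly real (it equals $\int_0^{2\pi}\sum_{E_j(t)<0}E_j(t)\,dt$). Taking imaginary parts and combining with the $(-1)^{N^-}$ factor and the BFK phase correction then produces the two asymptotics of the theorem; the relation \eqref{E:complexconjugate} provides a consistency check.

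The main technical obstacle is the block-decoupling. Although $B(t)$ is $O(1)$, its repeated action in the Dyson series for $S_m$ couples the $\pm$ sectors and could a priori contribute factors of size $e^{cm}$ to the $--$ block via the exponentially large $F_0^-$ dynamics. The standard adiabatic theorem, which would effect this decoupling automatically, fails here because $D_m$ is not self-adjoint (cf.\ Section~\ref{S:introduction}). The substitute is an iterative $1/m$ block-triangularization of the ODE together with an estimate showing that the off-diagonal corrections contribute only a $(1+o(1))$ factor to $\det S_m^{--}(2\pi)$ -- delicate enough to suppress the potentially exponentially large spurious terms while preserving the $O(1)$ Berry-phase contribution coming from $\Tr(P_0BP_0)$.
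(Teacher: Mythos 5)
Your framework — gauge reduction via $\calU$, the BFK monodromy formula, and a large-$m$ asymptotic of a monodromy determinant — matches the paper's overall structure, and you correctly identify where the difficulty lies. But your treatment of that difficulty is a gap, and the paper resolves it by a different device than the one you sketch.

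You apply BFK to $\tilD_m$ itself and then try to extract the asymptotics of $\det(\Id-S_m(2\pi))$, where $S_m$ solves the fully coupled ODE $\dot S_m=-(B+mG)S_m$. You acknowledge that the off-diagonal blocks of $B=\calU^{-1}\dot\calU$ could, through the exponentially unstable $F_0^-$ dynamics, contribute $e^{cm}$-sized terms to $S_m^{--}(2\pi)$, and you propose an ``iterative $1/m$ block-triangularization'' with an estimate showing the corrections are $1+o(1)$. But you never actually construct that estimate, and it is precisely the step that cannot be handwaved: because the problem is not self-adjoint, the quantum adiabatic theorem is unavailable, so there is no a priori reason a Dyson-type or $1/m$-expansion argument controls the exponentially growing sector. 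This is the content the paper is careful to flag, and your plan stops exactly where the real work begins.

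The paper avoids the issue entirely by working at the level of the operator rather than the monodromy. It interpolates $\tilD_{m,s}=\hatD_m-sR$ between the block-diagonal $\hatD_m$ (at $s=0$) and $\tilD_m$ (at $s=1$), where $R$ is the off-diagonal part of $i\calU^{-1}\dot\calU$, and proves in Lemma~\ref{L:tilD=hatD} that $\tfrac{\partial}{\partial s}\IM\log\detpm\tilD_{m,s}=O(1/m)$. The key trick is that while $\tilD_{m,s}^{-1}\tfrac{\partial}{\partial s}\tilD_{m,s}$ is not trace class, the difference $\tilD_{m,s}^{-1}\tfrac{\partial}{\partial s}\tilD_{m,s}-(\tilD_{m,s}^*)^{-1}\tfrac{\partial}{\partial s}\tilD_{m,s}^*$ is an order $-2$ pseudodifferential operator with parameter, and a symbol estimate (together with the vanishing trace of its leading part, which is off-diagonal) gives the $O(1/m)$ bound. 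Crucially, only the \emph{imaginary} part of $\log\detpm$ is controlled; the paper explicitly warns that $|\detpm\tilD_{m,s}|$ may change drastically along the deformation. So $\det(\Id-S_m(2\pi))$ and $\det(\Id-T_m(2\pi))$ need not be comparable as numbers, which means your plan to relate them via Schur complements and a Liouville formula is aimed at the wrong quantity. After the deformation the monodromy $T_m$ of $\hatD_m$ is exactly block-diagonal, Lemma~\ref{L:detI-T} reduces to a one-line spectral estimate, and the Berry phase emerges from Lemma~\ref{L:detTm-} and Proposition~\ref{P:PcalUP} as in your outline. To repair your proposal, replace the unproven block-triangularization by the operator-level deformation and the Forman-style variation formula for $\IM\log\detpm$.
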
 
 
\begin{Rem}{\em
In several interesting examples  (cf. \cite{BrAbanov}, \cite[\S6]{Br_symdet}) $ \mathrm{Im} \log {\det}_\pm D_m$ is independent of $m$, so that the $o(1)$ term in \eqref{E:det=Berry-} vanishes. It would be very interesting to find a general condition for this phenomenon. }
\end{Rem}


\section{Proof of Theorem~\ref{T:det=Berry}}\label{S:prdet=Berry}

\subsection{Burgelea-Friedlander-Kappeler formula}\label{SS:BFK}
Several steps in our proof are based on Theorem~1 of  \cite{BFK91} which gives a formula for  the determinant of a general elliptic operator on a circle in terms of its monodromy map. We only need the special case of this formula for  the operator of the type
\begin{equation}\label{E:calD}
	\calD\ := \ -i\frac{d}{dt}\ + \ A(t).
\end{equation}
 Let  $T(t)\in \Mat_{N\times N}(\CC)$ denote the solution of  the initial value problem 
\begin{equation}\label{E:T}
 \begin{aligned}
 	\calD\, T(t&) \ = \  0.\\
  	T(0)\ &= \ \Id.
 \end{aligned}
\end{equation}
The matrix $T(t)$ is called the {\em monodromy map} of the operator $\calD$.  Notice that the first equation in \eqref{E:T} is equivalent to 
\begin{equation}\label{E:T2}
    \frac{d}{dt}\,T(t) \ = \ -i\,A(t)\,T(t).
\end{equation}
As in \cite{BFK91} we set 
\[
	R(\calD)\ := \  \exp \Big( \frac{i}2\int_0^{2\pi}\Tr A(t)\,dt\Big)
\]
Define operators $\Gamma_\pm:\CC^N\to \CC^N$ by 
\[
	\Gamma_+\ := \ -\Id, \quad \Gamma_-\ := \ \Id
\]
and set
\[
	S_\pm(\calD)\ : = \ \det\Gamma_\pm\cdot  \exp \Big( \frac{i}2\int_0^{2\pi}\Tr\, (\Gamma_\pm A(t))\,dt\Big).
\]
Then 
\begin{equation}\label{E:SpmR}
\begin{aligned}
	S_+(\calD)\cdot R(\calD)\ &= \ (-1)^N,\\
	S_-(\calD)\cdot R(\calD)\ &= \ \exp \Big( i\int_0^{2\pi}\Tr A(t)\,dt\Big).
\end{aligned}
\end{equation}
By Theorem~1 of \cite{BFK91} 
\begin{equation}\label{E:BFK91a}\notag
	{\det}_\pm\calD\ = \ (-1)^NS_\pm(\calD)\cdot R(\calD)\det\big(\Id-T(2\pi)\big).
\end{equation}
Thus from \eqref{E:SpmR} we get 
\begin{equation}\label{E:detpm=calD}
\begin{aligned}
	{\det}_+\calD\ &= \ \det\big(\Id-T(2\pi)\big);\\
	{\det}_-\calD\ &= \ (-1)^N\exp \Big( i\int_0^{2\pi}\Tr A(t)\,dt\Big)\cdot\det\big(\Id-T(2\pi)\big).
\end{aligned}
\end{equation}

With this preliminaries discussed we are now ready to start the proof of Theorem~\ref{T:det=Berry}. We will give a brief outline of the proof in Section~\ref{SS:plan} after some additional notation is introduced.

\subsection{Bringing $H(t)$ to a blog-diagonal form}\label{SS:H+H-}
As in Section~\ref{S:Berry}, we let $F^\pm_t\subset \CC^N$  denote the subspaces spanned by the eigenvectors of $H(t)$ with negative and positive eigenvalues respectively.  We denote by  $P_t:\CC^N\to F^-_t$  the orthogonal projection.  Let $\calU(t)$ be the family of matrices which satisfy \eqref{E:calU(2pi)} and \eqref{E:calU-1PU}. Then 
\begin{equation}\notag
	\calU(t):F^\pm_0\ \to\ F^\pm_t.
\end{equation}
With respect to the decomposition $\CC^N= F^+_0\oplus{}F^-_0$ the operator 
\begin{equation}\label{E:tilH=}
	\tilH(t)\ :=\ \calU(t)^{-1}H(t)\,\calU(t)
\end{equation}
has a block-diagonal form 
\begin{equation}\label{E:U-1HU}
	\tilH(t)\ = \ \begin{pmatrix}\tilH^+(t)&0\\0&\tilH^-(t)\end{pmatrix}.
\end{equation}
Consider the operator
\begin{equation}\label{E:tilD}
	\tilD_m\ := \ \calU^{-1}D_m\,\calU \ = \ -i\frac{d}{dt}-im\tilH-i\,\calU^{-1}\dot\calU.
\end{equation}
Clearly, 
\begin{equation}\label{E:detD=dettilD}
	{\det}_\pm D_m\ = \ {\det}_\pm\, \tilD_m \ = \ 
	{\det}_\pm\, \Big(\, -i\frac{d}{dt}-im\tilH-i\,\calU^{-1}\dot\calU\,\Big).
\end{equation}

\subsection{The plan of the proof of Theorem~\ref{T:det=Berry}}\label{SS:plan}
Our proof of Theorem~\ref{T:det=Berry} is based on an application of \eqref{E:detpm=calD}. However it is not clear how to compute the large $m$ asymptotic of ${\det}\big(\Id-\tilT_m(2\pi)\big)$  when $\tilT_m(t)$ is the monodromy operator of $\tilD_m$.  The difficulty here is that  the imaginary time Schr\"odinger operator  $\tilD_m$ does not satisfy the quantum adiabatic theorem (cf. \cite{NenciuRasche92} for a discussion of an adiabatic limit for non self-adjoint Schr\"odinger operators). This, in particular, means that  as $m\to \infty$ the monodromy operator $\tilT_m(t)$ does not necessarily approach a block diagonal operator with respect to the decomposition $\CC^N= F^+_0\oplus{}F^-_0$.

Instead of applying \eqref{E:detpm=calD} directly  to $\tilD_m$, we first deform this operator, cf. Section~\ref{SS:tilDms}, in such a way that the phase of the determinant remains unchanged modulo $o(1)$ (note, however, that the absolute value of the determinant might change drastically under this deformation). Then in Lemma~\ref{L:detI-T} we compute the large $m$ asymptotic of the monodromy operator of the deformed operator.

\subsection{A deformation of the operator $\tilD_m$}\label{SS:tilDms}

Denote
\begin{equation}\label{E:hatD}
	\begin{aligned}
	\hatD_m\ := \ -i&\frac{d}{dt}-im\tilH-i\,P_0\,\calU^{-1}\dot\calU P_0 -i\,(\Id-P_0)\,\calU^{-1}\dot\calU\,(\Id-P_0),\\
	R\ &:= \ i\,P_0\,\calU^{-1}\dot\calU\,(\Id- P_0) +i\,(\Id-P_0)\,\calU^{-1}\dot\calU P_0,
	\end{aligned}
\end{equation}
and set
\begin{equation}\label{E:tilDms}
	\tilD_{m,s}\ : = \ \hatD_m\ - \ sR.
\end{equation}
In other words, using  the decomposition $\CC^N= F^+_0\oplus{}F^-_0$ we can write 
\[
	i\,\calU^{-1}\dot\calU \ = \ \begin{pmatrix}A^+&R_1\\R_2&A^-\end{pmatrix}.
\]
Then 
\begin{equation}\label{E:Dms decomposition}
  \begin{aligned}
	\hatD_m \ = \ \begin{pmatrix}\hatD_m^+&0\\0&\hatD_m^-\end{pmatrix}\ &= \  
	\begin{pmatrix}-i\frac{d}{dt}-im\tilH^+-A^+&0\\0&-i\frac{d}{dt}-im\tilH^--A^-\end{pmatrix};\\
	\tilD_{m,s}\ &= \ \begin{pmatrix}\hatD_m^+&-sR_1\\-sR_2&\hatD_m^-\end{pmatrix}.
 \end{aligned}
\end{equation}
Note also that $\tilD_{m,1}= \tilD_m$.

\begin{Lem}\label{L:tilDms invert}
There exists $m_0>0$ such that for all $m\ge m_0$, $s\in [0,1]$ the operator $\tilD_{m,s}$ is invertible.
\end{Lem}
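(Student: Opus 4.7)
The plan is to prove invertibility by deriving a uniform coercive $L^2$ estimate $\|\tilD_{m,s}\psi\|^2 \geq C_m\|\psi\|^2$ with $C_m>0$ for all $m\geq m_0$ and $s\in[0,1]$. This is obtained from the semiclassical identity $\tilD_{m,s}^*\tilD_{m,s} = -\partial_t^2 + m^2\tilH^2 + \text{l.o.t.}$; the driving observation is that since $0\notin\spec H(t)$ for all $t$ on the compact circle $S^1$, there exists $c>0$ with $\tilH(t)^2\geq c^2\Id$ uniformly in $t$, so the $m^2c^2$ term swallows the $O(m)$ remainder.

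Write $\tilD_{m,s} = -i\tfrac{d}{dt} + B_{m,s}(t)$ with $B_{m,s}(t) = -im\tilH(t) + C_s(t)$, where
\[
C_s(t)\ =\ -iP_0\calU^{-1}\dot\calU P_0\ -\ i(\Id-P_0)\calU^{-1}\dot\calU(\Id-P_0)\ -\ sR.
\]
Since $\calU$ is unitary, $\calU^{-1}\dot\calU$ is skew-Hermitian, so $-i\calU^{-1}\dot\calU$ is Hermitian; the Hermitian projections $P_0$ and $\Id-P_0$ preserve this, and $R$ is Hermitian by inspection, so $C_s$ is Hermitian with $\|C_s\|_\infty$ and $\|\dot C_s\|_\infty$ bounded uniformly in $s\in[0,1]$ (since $s$ enters only linearly through $-sR$). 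Using $B_{m,s}+B_{m,s}^* = 2C_s$, a direct computation yields
\[
\tilD_{m,s}^*\tilD_{m,s}\ =\ -\partial_t^2\ -\ 2iC_s\partial_t\ +\ m^2\tilH^2\ +\ m\bigl(i[\tilH,C_s]-\dot\tilH\bigr)\ +\ C_s^2\ -\ i\dot C_s.
\]
Pairing with $\psi\in H^1(S^1,\CC^N)$, invoking $\tilH(t)^2\geq c^2\Id$, and controlling the cross term by $2\|C_s\|_\infty\|\psi\|\|\dot\psi\|\leq \|\dot\psi\|^2 + \|C_s\|_\infty^2\|\psi\|^2$, the $\|\dot\psi\|^2$ contributions cancel and I obtain
\[
\|\tilD_{m,s}\psi\|^2\ \geq\ \bigl(m^2c^2 - K_1 m - K_2\bigr)\|\psi\|^2,
\]
with $K_1, K_2>0$ independent of $m$, $s$ and $\psi$.

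Choosing $m_0$ so that $m^2c^2 - K_1 m - K_2 > 0$ for every $m\geq m_0$, the right-hand side is strictly positive for $\psi\neq 0$, uniformly in $s\in[0,1]$; hence $\ker\tilD_{m,s}=0$. Since $\tilD_{m,s}$ differs from $-i\tfrac{d}{dt}$ on $S^1$ by a bounded zero-order operator, it is elliptic with the same principal symbol and is therefore Fredholm of index $0$ on $L^2(S^1,\CC^N)$; triviality of the kernel then upgrades to invertibility. The only genuinely delicate point is securing the $s$-uniformity of the constants $K_1,K_2$, which holds automatically because $\calU$ and $\dot\calU$ do not depend on $s$ and $s\in[0,1]$ multiplies the fixed bounded operator $R$; beyond this bookkeeping the argument is a routine semiclassical estimate.
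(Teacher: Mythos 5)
Your proof is correct, but it follows a genuinely different route from the paper's. The paper works directly with the sesquilinear form of the block-diagonal operator $\hatD_m$: on each of $F_0^\pm$ one has $\hatD_m^\pm = \bigl(-i\tfrac{d}{dt}-A^\pm\bigr) - im\tilH^\pm$, where the first summand is self-adjoint and $\tilH^\pm$ is sign-definite, so taking the imaginary part gives $\bigl|\<\hatD_m^\pm\psi_\pm,\psi_\pm\>\bigr|\ge m\,\bigl|\<\tilH^\pm\psi_\pm,\psi_\pm\>\bigr|\ge cm\|\psi_\pm\|^2$; by Cauchy--Schwarz this yields $\|\hatD_m\psi\|\ge cm\|\psi\|$, and then $\tilD_{m,s}=\hatD_m-sR$ is treated as a small bounded perturbation once $m>\|R\|/c$. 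You instead form the semiclassical square $\tilD_{m,s}^*\tilD_{m,s}$, replace the sign-definiteness of $\tilH^\pm$ by the weaker (and block-structure-free) inequality $\tilH^2\ge c^2\Id$, absorb the cross term $-2iC_s\partial_t$ into $\|\dot\psi\|^2$ by Young's inequality, and so bound $\tilD_{m,s}$ uniformly in one pass without passing through $\hatD_m$. Both arguments are elementary and give thresholds $m_0=O(1)$ with the same scaling; the paper's version is somewhat shorter because it avoids the $D^*D$ computation and the $\|\dot\psi\|^2$ bookkeeping, while yours has the mild advantage of not needing the block-diagonal form at all (you use only that $C_s$ and $R$ are Hermitian). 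Both conclude by the same standard observation that an injective elliptic operator of index zero on $S^1$ is invertible.
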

\begin{proof}
Since for all $t\in \RR$,  zero is not in the spectrum of $H(t)$, there exists a constant $c >0$ such that  $\tilH^+(t)>c \cdot \Id_{F_0^+}$ and $\tilH^-(t)<-c \cdot\Id_{F_0^-}$. Hence,  for  every smooth  functions $\psi_+:S^1\to F_0^+$, $\psi_-:S^1\to  F_0^-$  we have
\begin{multline}\notag
	\Big|\,\<\,\hatD_m^\pm\psi_\pm\,\psi_\pm\> \,\Big|\ = \  \,\Big|\big\<\,\big(-i\frac{d}{dt}-A^\pm\big)\psi_\pm,\psi_\pm\,\big\> \ - \  im\,\big\<\,\tilH^\pm\psi_\pm,\psi_\pm\,\big\>\,\Big|
	\\ \ge \ \Big|\,m\,\big\<\,\tilH^\pm\psi_\pm,\psi_\pm\,\big\>\,\Big| \ \ge \ c \, m\,\|\psi_\pm\|^2,
\end{multline}
and 
\(
	\big\|\,\hatD_m^\pm\psi_\pm\,\big\|\ \ge \ c \,m\,\|\psi_\pm\|.
\)
It follows that for every $\psi=(\psi_+,\psi_-):S^1\to \CC^N$ 
\[
	\big\|\,\hatD_m\psi\,\big\|\ \ge \ c \,m\,\|\psi\|.
\]
Then for $m> \frac{\|R\|}c $, $s\in [0,1]$ the operator $\tilD_{m,s}=\hatD_m-sR$ is invertible.
\end{proof}
Lemma~\ref{L:tilDms invert} implies that for large $m$ the determinant of $\tilD_{m,s}$ is well defined. 

\begin{Lem}\label{L:tilD=hatD}
As $m\to \infty$ we have 
\begin{equation}\label{E:tilD=hatD}
	\IM\log\detpm\tilD_{m,s}\ = \ \IM\log\detpm\hatD_m \ +\ o(1).
\end{equation}
\end{Lem}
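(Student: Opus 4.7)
The plan is to use the BFK formula \eqref{E:detpm=calD} to reduce the comparison of the two zeta-regularized determinants to a finite-dimensional computation, and then to differentiate in the parameter $s$. Writing $\tilD_{m,s} = -i\frac{d}{dt} + A_s(t)$ with
\begin{equation*}
A_s(t) \ = \ -im\tilH(t) \ - \ iP_0\calU^{-1}\dot\calU P_0 \ - \ i(\Id-P_0)\calU^{-1}\dot\calU(\Id-P_0) \ - \ sR(t),
\end{equation*}
I would first observe that $R(t)$ is block off-diagonal with respect to $\CC^N = F_0^+ \oplus F_0^-$, so $\Tr R(t) = 0$ and hence $\Tr A_s(t) = \Tr A_0(t)$ is independent of $s$. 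By \eqref{E:detpm=calD}, the $s$-dependence of $\detpm\tilD_{m,s}$ then enters only through the monodromy, so
\begin{equation*}
\log\detpm\tilD_{m,s} \ - \ \log\detpm\hatD_m \ = \ \log\det\bigl(\Id - T_{m,s}(2\pi)\bigr) \ - \ \log\det\bigl(\Id - T_{m,0}(2\pi)\bigr),
\end{equation*}
where $T_{m,s}(t)$ denotes the monodromy of $\tilD_{m,s}$. It therefore suffices to prove that the right-hand side has imaginary part $o(1)$ as $m\to\infty$, uniformly in $s\in[0,1]$.

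To this end, I would differentiate in $s$. From \eqref{E:T2} and variation of constants,
\begin{equation*}
\partial_s T_{m,s}(t) \ = \ i\, T_{m,s}(t)\int_0^t T_{m,s}(\tau)^{-1}R(\tau)T_{m,s}(\tau)\,d\tau.
\end{equation*}
Combining this with the identity $(\Id - T)^{-1}T = (\Id - T)^{-1} - \Id$ and using $\Tr R(\tau) = 0$ once more yields the clean formula
\begin{equation*}
\frac{d}{ds}\log\det\bigl(\Id - T_{m,s}(2\pi)\bigr) \ = \ -i\int_0^{2\pi}\Tr\bigl[K_{m,s}(\tau)\,R(\tau)\bigr]\,d\tau,
\end{equation*}
with $K_{m,s}(\tau):= T_{m,s}(\tau)\bigl(\Id - T_{m,s}(2\pi)\bigr)^{-1}T_{m,s}(\tau)^{-1}$. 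At $s=0$ the monodromy $T_{m,0}$ is block diagonal with respect to $F_0^+\oplus F_0^-$, so the same holds for $K_{m,0}(\tau)$; since $R(\tau)$ is block off-diagonal, the integrand vanishes identically.

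The remaining task is to propagate this cancellation to the full range $s \in [0,1]$ modulo an $o(1)$ error in $m$. Using the lower bound $\|\tilD_{m,s}\psi\|\ge cm\|\psi\|$ proved inside Lemma \ref{L:tilDms invert}, together with the fact that the leading eigenvalues of $\tilD_{m,s}$ coming from $\tilH^\pm$ have imaginary parts of order $\sim\pm m$, I would argue that the eigenvalues of $T_{m,s}(2\pi)$ split into two groups, of exponentially large and exponentially small modulus respectively, uniformly bounded away from $1$. This forces $\bigl(\Id - T_{m,s}(2\pi)\bigr)^{-1}$ and hence $K_{m,s}(\tau)$ to be approximately block diagonal up to corrections of order $e^{-cm}$, uniformly in $s\in[0,1]$ and $\tau\in[0,2\pi]$. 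Combined with the block off-diagonal structure of $R$, this yields $\Tr[K_{m,s}(\tau)R(\tau)] = o(1)$ uniformly in $\tau$ and $s$, and then integration in $s$ followed by taking imaginary parts finishes the argument for both $\det_+$ and $\det_-$.

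The hard part will be this final step: controlling $\bigl(\Id - T_{m,s}(2\pi)\bigr)^{-1}$ uniformly in $s$ and showing that the approximate block structure of $K_{m,s}(\tau)$ survives the deformation. The subtlety is that the intermediate matrix $T_{m,s}(\tau)^{-1}R(\tau)T_{m,s}(\tau)$ has a doubly exponentially large $(+,-)$-entry and a doubly exponentially small $(-,+)$-entry, so any estimate must exploit the block off-diagonal structure of $R$ built into \eqref{E:hatD} rather than rely on naive Gronwall bounds.
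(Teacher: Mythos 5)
Your first two steps are sound and take a genuinely different route from the paper: by \eqref{E:detpm=calD} together with $\Tr R = 0$ the $s$-dependence of $\detpm\tilD_{m,s}$ is carried entirely by $\det\big(\Id - T_{m,s}(2\pi)\big)$, your Duhamel formula for $\partial_s T_{m,s}$ is correct, and the integrand $\Tr[K_{m,s}(\tau)R(\tau)]$ does vanish identically at $s=0$. The gap is in the ``hard part,'' and it is not a routine estimate. The inference from ``the eigenvalues of $T_{m,s}(2\pi)$ split into exponentially large and exponentially small clusters'' to ``$(\Id - T_{m,s}(2\pi))^{-1}$, and hence $K_{m,s}(\tau)$, is approximately block diagonal with respect to $F_0^+\oplus F_0^-$'' does not follow: an eigenvalue bound says nothing about where the corresponding \emph{spectral subspaces} of $T_{m,s}(2\pi)$ sit relative to the fixed splitting $F_0^+\oplus F_0^-$. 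What is actually required is that the propagator of $\tilD_{m,s}$ admits an exponential dichotomy, uniform in $s\in[0,1]$, whose stable/unstable subbundles are $O(1/m)$-tilted from $F_0^\pm$ (a quantitative roughness-of-dichotomies statement), that $(\Id - T_{m,s}(2\pi))^{-1}$ is exponentially close to the oblique projection onto the stable subspace along the unstable one, and that $K_{m,s}(\tau)=T_{m,s}(\tau)(\Id - T_{m,s}(2\pi))^{-1}T_{m,s}(\tau)^{-1}$ is therefore close to the corresponding oblique projection at time $\tau$ --- whence approximate block diagonality with error $O(1/m)$, not $O(e^{-cm})$ (the subspace tilt, not the eigenvalue gap, dominates). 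Without this subspace-level statement, the conjugation by $T_{m,s}(\tau)$, whose off-diagonal blocks are themselves exponentially large in $m$, can destroy any approximate block structure of $(\Id - T_{m,s}(2\pi))^{-1}$, which is exactly the doubly-exponential phenomenon you flag but do not resolve.

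The paper sidesteps all of this by never touching the monodromy of the deformed operator. Using \eqref{E:complexconjugate} and the fact that $\Tr R=0$ makes $\det_+$ and $\det_-$ vary identically in $s$, it reduces $\frac{\partial}{\partial s}\IM\log\detpm\tilD_{m,s}$ to $\frac1{2i}\Tr\big[\tilD_{m,s}^{-1}\partial_s\tilD_{m,s} - (\tilD_{m,s}^*)^{-1}\partial_s\tilD_{m,s}^*\big]$ via a Forman-type argument. Although neither summand is trace class alone, the difference equals $-2im\,\tilD_{m,s}^{-1}\tilH\,(\tilD_{m,s}^*)^{-1}R$, a pseudodifferential operator of order $-2$ with parameter $m$; its leading symbol with parameter is block off-diagonal and hence traceless, and the subleading terms give $O(m^{-2})$ by symbol calculus with parameter, yielding the $O(m^{-1})$ bound directly with no dichotomy theory at all. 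If you want to complete your finite-dimensional route, the missing ingredient is precisely a uniform-in-$s$ persistence theorem for the stable/unstable splitting of the propagator of $\hatD_m - sR$ under the bounded perturbation $sR$, with explicit $O(\|R\|/m)$ control on the tilt of the invariant subbundles.
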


\begin{proof}
It suffices  to show that 
\begin{equation}\label{E:ddsIMlogdet}
	\frac{\partial}{\partial s}\,\IM\log\detpm\tilD_{m,s} \ = \ o(1).
\end{equation} 

Since  $\Tr R=0$ it follows from \eqref{E:detpm=calD} that 
\[
	\frac{\partial}{\partial s}\,\IM\log{\det}_+\tilD_{m,s} \ = \ \frac{\partial}{\partial s}\,\IM\log{\det}_-\tilD_{m,s}, 
\]
and similar equality holds for the adjoint operator $\tilD_{m,s}^*$. Hence, using \eqref{E:complexconjugate} we obtain
\begin{equation}\label{E:ddsIMlogdet=}
	\frac{\partial}{\partial s}\,\IM\log\detpm\tilD_{m,s} 
	\ = \  \frac1{2i}\frac{\partial}{\partial s}\, \Big(\,\log\detpm\tilD_{m,s}- \log\detpm\tilD_{m,s}^*\,\Big).
\end{equation}
Formally, the derivative $\frac{\partial}{\partial s}\log\detpm\tilD_{m,s}$ should be equal to the trace of the operator $\tilD_{m,s}^{-1}\,\frac{\partial}{\partial s} \tilD_{m,s}$. However, the later operator is not of trace class and some regularization using  analytic continuation, cf. Section~\ref{S:zeta} ,  is needed to compute the derivative of $\log\detpm\tilD_{m,s}$. However, no analytic continuation is needed to compute the right hand side of \eqref{E:ddsIMlogdet=} as we shall now explain.

We have
\begin{equation}\label{E:D-D*}
	\tilD_{m,s}^{-1}\ - \ (\tilD_{m,s}^*)^{-1}\ = \ \tilD_{m,s}^{-1}\,\big(\tilD_{m,s}^*-\tilD_{m,s}\big) (\tilD_{m,s}^*)^{-1} 
	\ = \ 2im\,\tilD_{m,s}^{-1}\,\tilH\,(\tilD_{m,s}^*)^{-1} 
\end{equation}
Also,  since $\calU$ is unitary, the operator $i\,\calU^{-1}\dot\calU$ is self-adjoint, and, hence,  so is $R$. Thus
\begin{equation}\label{E:ddsD=ddsD*}
	\frac{\partial}{\partial s} \tilD_{m,s}^*  \ = \ \frac{\partial}{\partial s} \tilD_{m,s} \ = \ -R,
\end{equation}
From \eqref{E:D-D*} and \eqref{E:ddsD=ddsD*} we see that 
\[
	\tilD_{m,s}^{-1}\,\frac{\partial}{\partial s} \tilD_{m,s}\ - \ (\tilD_{m,s}^*)^{-1}\,\frac{\partial}{\partial s} \tilD_{m,s}^*
	\ = \ -2im\,\tilD_{m,s}^{-1}\,\tilH\,(\tilD_{m,s}^*)^{-1}R
\]
is an elliptic pseudo-differential operator of order -2 and, hence, is of trace class. A verbatim repetition of the argument in the proof of Proposition~1.3 of \cite{Forman87} shows now that 
\begin{multline}\label{E:dds IM}
	\frac{\partial}{\partial s}\,\IM\log\detpm\tilD_{m,s} \\ = \  
	\frac1{2i}\Tr  \Big[\, \tilD_{m,s}^{-1}\,\frac{\partial}{\partial s} \tilD_{m,s}\ - \ (\tilD_{m,s}^*)^{-1}\,\frac{\partial}{\partial s} \tilD_{m,s}^*\,\Big]
	 \ = \   -m\,\Tr\tilD_{m,s}^{-1}\,\tilH\,(\tilD_{m,s}^*)^{-1}R.
\end{multline}

The operator 
\[
	B_m\ := \ \tilD_{m,s}^{-1}\,\tilH\,(\tilD_{m,s}^*)^{-1}R
\] 
is a pseudo-differential operator with parameter $m$ of order -2, cf. \cite{ShubinPDObook}. Its leading symbol with parameter (cf.  \cite{ShubinPDObook}) is the same as the leading symbol of the operator
\[
	\tilB_m\ :=\   \big(-i\frac{d}{dt}-im\tilH\,\big)^{-1}\tilH\,	\big( -i\frac{d}{dt}+im\tilH\,\big)^{-1}R.
\]
Thus $B_m-\tilB_m$ is a differential operator with parameter of order -3. This means that its full symbol with parameter $\sigma(t,\xi,m)$ satisfies  
\[
	\big|\,\sigma(t,\xi,m)\big| \ \le \ C\,\big(1+|\xi|+m\big)^{-3}
\] 
for some constant $C>0$. Hence,
\begin{equation}\label{E:TrB-B0}
	|\Tr(B-\tilB_m)| \ = \ \Big|\, \int_0^{2\pi}\int_\RR\,  \sigma(t,\xi,m)\,d\xi dt\,\Big| \ \le \ C_1m^{-2}.
\end{equation}
Notice also that with respect to the decomposition $\CC^N=F^+_0\oplus F^-_0$ the operator $\tilB_m$ has the form 
\[
	\tilB\ = \ \begin{pmatrix} 0&\tilB_m^+\\\tilB_m^-&0\end{pmatrix}.
\]
Hence, $\Tr\tilB_m=0$. From \eqref{E:dds IM} and  \eqref{E:TrB-B0} we now conclude that 
\[
	\Big|\,\frac{\partial}{\partial s}\,\IM\log\detpm\tilD_{m,s}\,\Big| \ \le\ C_1m^{-1}.
\]
\end{proof}

\subsection{Computation of the determinant of $\hatD_m$}\label{SS:hatDm}
In view of Lemma~\ref{L:tilD=hatD}, to prove Theorem~\ref{T:det=Berry} it is enough to compute the phase of the determinant of the operator $\hatD_m$. Let  $T_m(t)\in \Mat_{N\times N}(\CC)$ denote monodromy map of the operator $\hatD_m$, cf. Subsection~\ref{SS:BFK}.

\begin{Lem}\label{L:BFK}
\begin{align}
 	{\det}_+ \hatD_m\  &= \ \det\big(\Id-T_m(2\pi)\big)    \label{E:BFK+}  \\
	{\det}_- \hatD_m\  &= \ (-1)^Ne^{m\int_{0}^{2\pi}\Tr H(t)dt} \det\big(\Id-T_m(2\pi)\big).
	\label{E:BFK-}
\end{align}

\end{Lem}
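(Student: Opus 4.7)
The plan is to apply the Burghelea--Friedlander--Kappeler formula \eqref{E:detpm=calD} directly to $\hatD_m$. The operator is already presented in the required form $\calD = -i\frac{d}{dt} + A(t)$ with
\[
	A(t) \;=\; -im\tilH(t) \;-\; iP_0\calU^{-1}\dot\calU P_0 \;-\; i(\Id-P_0)\calU^{-1}\dot\calU(\Id-P_0),
\]
and $T_m(t)$ is by construction its monodromy. The top line of \eqref{E:detpm=calD} then yields \eqref{E:BFK+} immediately, with no further computation. Everything therefore reduces to evaluating the trace integral $\int_0^{2\pi}\Tr A(t)\,dt$ appearing in the formula for $\det_-$.

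For this I would split $\Tr A(t)$ into its three pieces. Cyclicity of the trace combined with $P_0^2 = P_0$ gives
\[
	\Tr(P_0 X P_0) \;+\; \Tr\bigl((\Id - P_0)X(\Id-P_0)\bigr) \;=\; \Tr X,
\]
so the two projector terms collapse to $-i\Tr(\calU^{-1}\dot\calU)$. Since $\tilH = \calU^{-1}H\calU$, conjugation invariance of the trace gives $\Tr\tilH(t) = \Tr H(t)$, so
\[
	\int_0^{2\pi}\Tr A(t)\,dt \;=\; -im\int_0^{2\pi}\Tr H(t)\,dt \;-\; i\int_0^{2\pi}\Tr\bigl(\calU(t)^{-1}\dot\calU(t)\bigr)\,dt.
\]

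The one mildly subtle point, and the step I would flag as the main thing to get right, is the remaining integral. Because $\calU(t)$ is unitary, $\Tr(\calU^{-1}\dot\calU)(t) = \frac{d}{dt}\log\det\calU(t)$ is the logarithmic derivative of the $U(1)$-valued function $t \mapsto \det\calU(t)$. The boundary conditions $\calU(0) = \calU(2\pi) = \Id$ from \eqref{E:calU(2pi)} make this function a loop in $U(1)$ based at $1$, so its integral over $[0,2\pi]$ equals $2\pi i n$ for some integer $n$. Substituting back gives $i\int_0^{2\pi}\Tr A\,dt = m\int_0^{2\pi}\Tr H\,dt + 2\pi i n$, and since $e^{2\pi i n} = 1$ we conclude $\exp\bigl(i\int_0^{2\pi}\Tr A\,dt\bigr) = e^{m\int_0^{2\pi}\Tr H(t)\,dt}$. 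Plugging this into the bottom line of \eqref{E:detpm=calD} produces \eqref{E:BFK-}.
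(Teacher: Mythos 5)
Your proof is correct and takes essentially the same approach as the paper: apply the BFK formula \eqref{E:detpm=calD} directly, use cyclicity to collapse the two projector blocks to $\Tr(\calU^{-1}\dot\calU)=\tfrac{d}{dt}\log\det\calU$, use conjugation invariance for $\Tr\tilH=\Tr H$, and exponentiate. The only cosmetic difference is that you flag the $2\pi i\ZZ$ ambiguity in $\log\det\calU$ (winding number of the loop $t\mapsto\det\calU(t)$ in $U(1)$), whereas the paper simply writes $\exp\bigl(\int_0^{2\pi}\tfrac{d}{dt}\log\det\calU\,dt\bigr)=\det\calU(2\pi)/\det\calU(0)=1$; both observations lead to the same conclusion after exponentiation.
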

\begin{proof}
The equality \eqref{E:BFK+} is just the first equation in \eqref{E:detpm=calD}. To prove  \eqref{E:BFK-} set 
\begin{equation}\label{E:Am(t)}
	A_m(t)\ := \ -mi\tilH(t)-iP_0\,\calU^{-1}\dot\calU P_0 -i(\Id-P_0)\,\calU^{-1}\dot\calU\,(\Id-P_0).
\end{equation}
To use the second equation in \eqref{E:detpm=calD} we need to compute $\int_0^{2\pi}\Tr A_m(t)\,dt$. Notice, first, that  
\[
	\Tr\big(P_0\,\calU^{-1}\dot\calU P_0 +(\Id-P_0)\,\calU^{-1}\dot\calU\,(\Id-P_0)\big)\ = \ \Tr\,\calU^{-1}\dot\calU \ = \ 
	\frac{d}{dt}\log\det\calU.
\]
Hence, 
\begin{multline}\notag
	\exp \Big( i\int_0^{2\pi}\Tr\big(-iP_0\,\calU^{-1}\dot\calU P_0 -i(\Id-P_0)\,\calU^{-1}\dot\calU\,(\Id-P_0)\,\big)dt\Big) \\ = \ 
	\exp\big(\int_0^{2\pi}\frac{d}{dt}\log\det\calU\,dt\Big) \ = \ 
	\det\calU(2\pi)/\det\calU(0) \ = \ 1.
\end{multline}
Also by \eqref{E:tilH=} we have $\Tr \hatH(t)\ = \ \Tr{}H(t)$. Hence, from \eqref{E:Am(t)} and \eqref{E:detpm=calD} we obtain \eqref{E:BFK-}.
\end{proof}

%

\subsection{Computation of   $\operatorname{det}\big(\Id-T_m(2\pi)\big)$ }\label{SS:Tm(t)}
To finish the proof of Theorem~\ref{T:det=Berry} we now need to compute $\det\big(\Id-T_m(2\pi)\big)$.  From \eqref{E:U-1HU} we conclude that 
\begin{equation}\label{E:Tm=Tm+Tm-}
	T_m(t)\ = \ \begin{pmatrix} T_m^+(t)&0\\0&T_m^-(t)\end{pmatrix},
\end{equation}
where
\begin{equation}\label{E:Tmpm=}
  \begin{aligned}
	\dot{T}_m^+(t)\ &= \  - \Big(\,m\tilH^++(\Id-P_0)\,\calU^{-1}\dot\calU (\Id-P_0)\,\Big)\,T_m^+(t),\\
	\dot{T}_m^-(t)\ &= \  - \Big(\,m\tilH^-+P_0\,\calU^{-1}\dot\calU\,P_0\,\Big)\,T_m^-(t).
  \end{aligned}
\end{equation}
Hence, 
\begin{equation}\label{detI-T=Tpm}
	\det\big(\Id-T_m(2\pi)\big) \ = \ \det\big(\Id-T_m^+(2\pi)\big)\cdot \det\big(\Id-T_m^-(2\pi)\big).
\end{equation}

\begin{Lem}\label{L:detI-T}
As $m\to \infty$ we have
\begin{equation}\label{E:detI-Tpm}
	\det\big(\Id-T_m^+(2\pi)\big) =  1+o(1), \quad \det\big(\Id-T_m^-(2\pi)\big)  = (-1)^{N^-}\cdot \det T_m^-(2\pi)\cdot\big(1+o(1)\big).
\end{equation}
Hence, it follows from \eqref{detI-T=Tpm} that
\begin{equation}\label{E:detI-T}
	\det\big(\Id-T_m(2\pi)\big)   =  (-1)^{N^-}\cdot \det T_m^-(2\pi)\cdot\big(1+o(1)\big).
\end{equation}
\end{Lem}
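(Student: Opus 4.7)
The plan is to estimate the two factors in the product \eqref{detI-T=Tpm} separately, using simple energy estimates that exploit the opposite signs of $\tilH^+$ and $\tilH^-$. Since $0\notin\spec H(t)$, there is a constant $c>0$ with $\tilH^+(t)\ge c\,\Id$ and $\tilH^-(t)\le -c\,\Id$ uniformly in $t$, while the lower-order perturbation $\calU^{-1}\dot\calU$ is bounded uniformly and is independent of $m$.

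First I would treat $T_m^+(2\pi)$. Differentiating $\|T_m^+(t)\psi\|^2$ using the first equation of \eqref{E:Tmpm=} and the self-adjointness of $\tilH^+$ yields
\[
	\frac{d}{dt}\,\|T_m^+(t)\psi\|^2\ =\ -\,\<\,\big(2m\tilH^+(t)+K^+(t)\big)\,T_m^+(t)\psi,\,T_m^+(t)\psi\,\>,
\]
where $K^+(t)$ is the Hermitian part of $2(\Id-P_0)\calU^{-1}\dot\calU(\Id-P_0)$, uniformly bounded and independent of $m$. For $m$ large enough that $2m\tilH^+(t)+K^+(t)\ge mc\,\Id$ uniformly in $t$, the right-hand side is bounded above by $-mc\,\|T_m^+(t)\psi\|^2$, and Gronwall gives $\|T_m^+(2\pi)\|\le e^{-\pi mc}$. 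In particular $T_m^+(2\pi)\to 0$ in norm, so $\det\big(\Id-T_m^+(2\pi)\big)=1+o(1)$.

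For $T_m^-$ the sign of the potential is opposite, so I would control the inverse instead. Since $T_m^-(t)$ is an invertible fundamental matrix, $S_m^-(t):=T_m^-(t)^{-1}$ satisfies $\dot S_m^-=S_m^-\big(m\tilH^-(t)+K_1^-(t)\big)$ for an $m$-independent bounded $K_1^-$. A parallel energy estimate applied to $\|S_m^-(t)^*\psi\|^2$, now using $\tilH^-\le -c\,\Id$, gives $\|T_m^-(2\pi)^{-1}\|\le e^{-\pi mc}$. Combined with the algebraic identity
\[
	\det\big(\Id-T_m^-(2\pi)\big)\ =\ (-1)^{N^-}\,\det T_m^-(2\pi)\cdot\det\big(\Id-T_m^-(2\pi)^{-1}\big),
\]
obtained by factoring $-T_m^-(2\pi)$ on the $N^-$-dimensional space $F_0^-$, this yields the second formula in \eqref{E:detI-Tpm}, since $\det\big(\Id-T_m^-(2\pi)^{-1}\big)=1+o(1)$. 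The identity \eqref{E:detI-T} then follows immediately by multiplying the two estimates via \eqref{detI-T=Tpm}.

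The main point to check is the bookkeeping of the energy inequality in the presence of the non-self-adjoint lower-order term $\calU^{-1}\dot\calU$: one must confirm that its Hermitian and anti-Hermitian parts do not destroy the coercivity inherited from $m\tilH^\pm$. Since this term is bounded uniformly in $t\in[0,2\pi]$ and independent of $m$, however, it is absorbed into the leading $m\tilH^\pm$ contribution as soon as $m$ is sufficiently large, so the estimate is essentially routine.
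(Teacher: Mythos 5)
Your proposal is correct and follows essentially the same route as the paper: Gronwall-type energy estimates exploiting the signs of $\tilH^\pm$ to show $\|T_m^+(2\pi)\|\to 0$ and $\|T_m^-(2\pi)^{-1}\|\to 0$, then factoring out $\det T_m^-(2\pi)$. The only simplification you missed is that $\calU^{-1}\dot\calU$ is \emph{skew-adjoint} (because $\calU$ is unitary), so the Hermitian parts $K^\pm$ you introduce are identically zero and there is no need to absorb them into $m\tilH^\pm$ for large $m$; the paper uses this to get the clean inequalities $\frac{d}{dt}\|T_m^\pm(t)v\|^2\lessgtr \mp cm\|T_m^\pm(t)v\|^2$ for all $m$.
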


\begin{proof}
Since, the operator $\calU^{-1}\dot\calU$ is skew-adjoint, for any $v\in \CC^N$ we have
\begin{equation}\label{E:ddtT}
	\frac{d}{dt}\, \|T_m^\pm(t)v\|^2 \ = \ 2\RE\, \big\<\,T_m^\pm(t)v,\dot{T}_m^\pm(t)v\,\big\> \ = \
	 -2\RE\, \big\<\,T_m^\pm(t)v,m\tilH^\pm(t)T_m^\pm(t)v\,\big\>.
\end{equation}
A sin the proof of Lemma~\ref{L:tilDms invert} there exists a constant $c>0$ such that  $\tilH^+(t)> c$, $\tilH^-(t)< -c$, for all $t\in [0,2\pi]$.   Hence, from \eqref{E:Tmpm=} and \eqref{E:ddtT} we obtain
\[
		\frac{d}{dt}\, \|T_m^+(t)v\|^2 \ \le \ -c m \|T_m^+(t)v\|^2, \qquad \frac{d}{dt}\, \|T_m^-(t)v\|^2 \ \ge \ cm  \|T_m^-(t)v\|^2.
\]
We conclude that  
\[
		\|T_m^+(t)v\|^2 \ \le \ e^{-cmt}\|v\|^2, \quad \|T_m^+(t)v\|^2 \ \ge \ e^{cmt}\|v\|^2, 
\]
and, hence, the spectrum $\spec(T_m^\pm(t))$ of the operators $T_m^\pm(t)$ satisfies
\begin{equation}\label{E:estimateT}
	\spec(T_m^+(t))\ \subset \ \big\{z\in \CC:\,|z|\le e^{-cmt/2}\big\},\quad \spec(T_m^-(t))\ \subset \ \big\{z\in \CC:\,|z|\ge e^{cmt/2}\big\}.
\end{equation}
The equality \eqref{E:detI-Tpm} follows immediately from \eqref{E:estimateT}.
\end{proof}

\begin{Lem}\label{L:detTm-}
For all $t\in [0,2\pi]$ the following equality holds modulo $2\pi\ZZ$
\begin{equation}\label{E:detTm-}
	\IM\log\det T_m^-(t)\ =\  i\int_0^t \Tr \Big(P_0\,\calU^{-1}(t)\,\dot{\calU}(t)P_0\Big)\,dt.
\end{equation}
\end{Lem}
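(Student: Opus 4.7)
The plan is to differentiate $\log\det T_m^-(t)$, isolate the imaginary part, and then integrate. By Jacobi's formula applied to the second equation in \eqref{E:Tmpm=},
\[
\frac{d}{dt}\log\det T_m^-(t) \;=\; \Tr\bigl(\dot T_m^-(t)\,T_m^-(t)^{-1}\bigr) \;=\; -m\Tr\tilH^-(t)\;-\;\Tr\bigl(P_0\,\calU^{-1}(t)\dot\calU(t)\,P_0\bigr),
\]
with all traces taken on $F_0^-$.

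The key step is a parity check that kills the $m$-dependent piece. Because $H(t)$ is self-adjoint and $\calU(t)$ is unitary, the conjugate $\tilH(t)=\calU^{-1}H\calU$ is self-adjoint, hence its block $\tilH^-(t)$ on $F_0^-$ is self-adjoint and $\Tr\tilH^-(t)\in\RR$. On the other hand, differentiating $\calU^{-1}\calU=\Id$ shows that $\calU^{-1}\dot\calU$ is skew-adjoint, so $P_0\calU^{-1}\dot\calU P_0$ is skew-adjoint on $F_0^-$ and its trace is purely imaginary. Since $\IM(-Z)=iZ$ whenever $Z$ is purely imaginary, taking imaginary parts annihilates the $-m\Tr\tilH^-$ contribution and yields
\[
\frac{d}{dt}\IM\log\det T_m^-(t) \;=\; i\,\Tr\bigl(P_0\,\calU^{-1}(t)\dot\calU(t)\,P_0\bigr).
\]

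I would then integrate this identity from $0$ to $t$. The initial condition $T_m^-(0)=\Id$ fixes $\log\det T_m^-(0)\equiv 0\pmod{2\pi i\ZZ}$, so \eqref{E:detTm-} drops out modulo $2\pi\ZZ$, the indeterminacy being precisely the branch ambiguity of the complex logarithm. The whole argument is essentially a one-line computation once Jacobi's formula is in hand, so I do not anticipate a substantive obstacle; the only bookkeeping point is to keep the $2\pi\ZZ$ ambiguity explicit, and to note that the integration variable on the right-hand side of \eqref{E:detTm-} is a dummy variable distinct from the upper limit $t$.
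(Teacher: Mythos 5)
Your proposal is correct and follows essentially the same route as the paper: differentiate $\log\det T_m^-$ via Jacobi's formula applied to \eqref{E:Tmpm=}, use self-adjointness of $\tilH^-$ and skew-adjointness of $P_0\calU^{-1}\dot\calU P_0$ to kill the real $m$-dependent term upon taking imaginary parts, and integrate from $0$ using $T_m^-(0)=\Id$. The explicit remark about the $2\pi\ZZ$ branch ambiguity and the dummy integration variable is sound bookkeeping but does not change the substance.
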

\begin{proof}
The operator $\tilH^-(t)$ is self-adjoint, while the operator $P_0\,\calU^{-1}(t)\,\dot{\calU}(t)P_0$ is skew-adjoint. Hence $\Tr{}\tilH^-(t)$ is real and $\Tr\Big(P_0\,\calU^{-1}(t)\,\dot{\calU}(t)P_0\Big)$ is imaginary.
Using \eqref{E:Tmpm=} we obtain
\begin{multline}\notag
	\frac{d}{dt}\,\IM\log\det T_m^-(t)\ = \ \IM\,\Tr\, \dot{T}_m^-(t)\,\big(T_m^-(t)\big)^{-1}
	\\ = \ -\IM\Tr \Big(P_0\,\calU^{-1}(t)\,\dot{\calU}(t)P_0\Big)
	\ = \ i\Tr \Big(P_0\,\calU^{-1}(t)\,\dot{\calU}(t)P_0\Big).
\end{multline}

\end{proof}

\subsection{Proof of Theorem~\ref{T:det=Berry}}\label{SS:prdet=Berry}
Combining  \eqref{E:detD=dettilD} with Lemmas~\ref{L:tilD=hatD}, \ref{L:BFK}, \ref{L:detI-T}, and \ref{L:detTm-} we conclude that modulo $2\pi\ZZ$
\[
    \begin{aligned}
	\IM\log{\det}_+ D_m \ &= \ N^-\pi \ + \ i\Tr \Big(P_0\,\calU^{-1}(t)\,\dot{\calU}(t)P_0\Big) \ + \ o(1),\\
	\IM\log{\det}_- D_m \ &= \ N^+\pi\ +\  i\Tr \Big(P_0\,\calU^{-1}(t)\,\dot{\calU}(t)P_0\Big) \ + \ o(1).
   \end{aligned}
\]
Theorem~\ref{T:det=Berry} follows now from  Proposition~\ref{P:PcalUP}. \hfill$\square$

\section*{Acknowledgement}
I would like to thank Alexander Abanov from whom I learned about the relationship between the Berry phase and the determinants and who  suggested many important corrections and improvements to this paper.  


\begin{thebibliography}{10}

\bibitem{BrAbanov}
A.~Abanov and M.~Braverman, \emph{{Topological calculation of the phase of the
  determinant of a non self-adjoint elliptic operator}}, Comm. Math. Phys.
  \textbf{259} (2005), 287--305.

\bibitem{AbanovAA02}
A.~G. Abanov and Ar. Abanov, \emph{Berry phase for a ferromagnet with
  fractional spin}, Phys. Rev. B \textbf{65} (2002), 184407.

\bibitem{BGSS07}
C.~G. Beneventano, P.~Giacconi, E.~M. Santangelo, and R.~Soldati, \emph{The
  quantum {H}all effect in graphene samples and the relativistic {D}irac
  effective action}, J. Phys. A \textbf{40} (2007).
  
  \bibitem{BGSS09}
C.G. Beneventano, P.~Giacconi, E.M. Santangelo, and R.~Soldati, \emph{{Planar
  QED at finite temperature and density: Hall conductivity, Berry's phases and
  minimal conductivity of graphene}}, J.Phys. \textbf{A42} (2009), 275401.

\bibitem{Berry84}
M.~V. Berry, \emph{Quantal phase factors accompanying adiabatic changes}, Proc.
  Roy. Soc. London Ser. A \textbf{392} (1984), no.~1802, 45--57.

\bibitem{Br_symdet}
M.~Braverman, \emph{Symmetrized trace and symmetrized determinant of odd class
  pseudo-differential operators}, J. Geom. Phys. \textbf{59} (2009), no.~4,
  459--474.

\bibitem{BFK91}
D.~Burghelea, L.~Friedlander, and T.~Kappeler, \emph{On the determinant of
  elliptic differential and finite difference operators in vector bundles over
  {$S\sp 1$}}, Comm. Math. Phys. \textbf{138} (1991), no.~1, 1--18.

\bibitem{GeomPhases_CLQ_book04}
D.~Chru{\'s}ci{\'n}ski and A.~Jamio{\l}kowski, \emph{Geometric phases in
  classical and quantum mechanics}, Progress in Mathematical Physics, vol.~36,
  Birkh\"auser Boston Inc., Boston, MA, 2004.

\bibitem{Forman87}
R.~Forman, \emph{Functional determinants and geometry}, Invent. Math.
  \textbf{88} (1987), no.~3, 447--493.

\bibitem{Kato50}
T.~Kato, \emph{On the adiabatic theorem of quantum mechanics}, Journal of the
  Physical Society of Japan \textbf{5}, 435--439.

\bibitem{Martinez89}
J.~C. Martinez, \emph{Fermion determinants and the {B}erry phase}, J. Phys. A
  \textbf{22} (1989), no.~10, 1711--1714.

\bibitem{Nenciu80}
G.~Nenciu, \emph{On the adiabatic theorem of quantum mechanics}, J. Phys. A
  \textbf{13} (1980), L15--L18.
  
 \bibitem{NenciuRasche92}
G.~Nenciu and G.~Rasche, \emph{On the adiabatic theorem for nonselfadjoint
  {H}amiltonians}, J. Phys. A \textbf{25} (1992), no.~21, 5741--5751.


\bibitem{RaySinger71}
D.~B. Ray and I.~M. Singer, \emph{{R}-torsion and the {Laplacian} on
  {Riemannian} manifolds}, Adv. in Math. \textbf{7} (1971), 145--210.

\bibitem{Seeley66}
R.~T. Seeley, \emph{Complex powers of an elliptic operator}, Singular
  {I}ntegrals ({P}roc. {S}ympos. {P}ure {M}ath., {C}hicago, {I}ll., 1966),
  Amer. Math. Soc., Providence, R.I., 1967, pp.~288--307.

\bibitem{GeomPhases_book89}
A.~Shapere and F.~Wilczek (eds.), \emph{Geometric phases in physics}, Advanced
  Series in Mathematical Physics, vol.~5, World Scientific Publishing Co. Inc.,
  Teaneck, NJ, 1989.

\bibitem{ShubinPDObook}
M.~A. Shubin, \emph{Pseudodifferential operators and spectral theory}, Springer
  Verlag, Berlin, New York, 1987.

\bibitem{Simon83}
B.~Simon, \emph{Holonomy, the quantum adiabatic theorem, and {B}erry's phase},
  Phys. Rev. Lett. \textbf{51} (1983), no.~24, 2167--2170.
  
\bibitem{WilczekZee84}
F.~Wilczek and A.~Zee, \emph{Appearance of gauge structure in simple dynamical
  systems}, Phys. Rev. Lett. \textbf{52} (1984), 2111--2114.


\end{thebibliography}
\end{document}